\documentclass{llncs}

\usepackage{graphics}
\usepackage[dvips]{epsfig}

\usepackage{amsmath}
\usepackage{amssymb}
\usepackage{amsfonts}
\usepackage{graphicx}

\pagestyle{plain}

\newcommand{\alg}{\hfill $\diamondsuit$}

\newcommand{\remove}[1]{}

\newcommand{\qq}{\hfill $\square$ \smallbreak}



\begin{document}

\title{{\bf Deterministic Broadcasting and Gossiping\\ with Beeps}}

\author{
Kokouvi Hounkanli
\and
Andrzej Pelc\thanks{Supported in part by NSERC discovery grant 8136 -- 2013
and by the Research Chair in Distributed Computing of
the Universit\'{e} du Qu\'{e}bec en Outaouais.}
}

\institute{D\'{e}partement d'informatique, Universit\'{e} du Qu\'{e}bec en Outaouais,\\
Gatineau, Qu\'{e}bec J8X 3X7,
Canada. E-mails: {\tt houk06@uqo.ca}, {\tt  pelc@uqo.ca}.}

\date{ }

\maketitle

\begin{abstract}
Broadcasting and gossiping are fundamental communication tasks in networks. In broadcasting, 
one node of a network has a message that must be learned by all other nodes. In gossiping, every node has a (possibly different) message,
and all messages must be learned by all nodes. We study these well-researched tasks in a very weak  communication model, called the
 {\em beeping model}. Communication proceeds in synchronous rounds. In each round, a node can either listen, i.e., stay silent,
or beep, i.e., emit a signal. A node hears a beep in a round, if it listens in this round and if one or more adjacent nodes beep in this round. All nodes have different
labels from the set $\{0,\dots , L-1\}$. 

Our aim is to provide fast  deterministic algorithms for broadcasting and gossiping in the beeping model. Let 
$N$ be an upper bound on the size of the network and $D$ its diameter. Let $m$ be the size of the message in broadcasting,
and $M$ an upper bound on the size of all input messages in gossiping. For the task of broadcasting we give an algorithm working in time $O(D+m)$ for arbitrary networks,
which is optimal. For the task of gossiping we give an algorithm working in time $O(N(M+D\log L))$ for arbitrary networks.

\vspace{2ex}

\noindent {\bf Keywords:} algorithm, broadcasting, gossiping, deterministic, graph, network, beep. 

\vspace{2ex}
At the time of writing this paper we were unaware of the paper\\ A. Czumaj, P. Davis, Communicating with Beeps,  arxiv:1505.06107 [cs.DC]
which contains the same results for broadcasting and a stronger upper bound for gossiping in a slightly different model.
\end{abstract}




\section{Introduction}
{\bf The background and the problem.}
Broadcasting and gossiping are fundamental communication tasks in networks. In broadcasting, 
one node of a network, called the {\em source}, has a message that must be learned by all other nodes. In gossiping, every node has a (possibly different) input message,
and all messages must be learned by all nodes. We study these well-researched tasks in a very weak communication model, called the
 {\em beeping model}. Communication proceeds in synchronous rounds. In each round, a node can either listen, i.e., stay silent,
or beep, i.e., emit a signal. A node hears a beep in a round, if it listens in this round and if one or more adjacent nodes beep in this round. 
The beeping model has been introduced
in \cite{CK} for vertex coloring and used
in \cite{AABCHK} to solve the MIS problem. The beeping model is widely applicable, as it makes small demands on communicating devices, relying only on carrier sensing. In fact, as mentioned in 
\cite{CK}, beeps are an even weaker way of communicating than using one-bit messages, as the latter ones allow three different states (0,1 and no message),
while beeps permit to differentiate only between a signal and its absence. 

The network is modeled as a simple connected undirected graph. Initially all nodes are dormant. The adversary wakes up the source in the case of broadcasting
and some nonempty subset of nodes, at possibly different times, in the case of gossiping. A woken up node starts executing the algorithm.  A dormant node is woken up by a beep of any neighbor.
Our aim is to provide fast  deterministic algorithms for broadcasting and gossiping in the beeping model.
The time of broadcasting is defined as the number of rounds between the wakeup of the source and the round in which all nodes of the network acquire
the source message.  The time of gossiping is defined as the number of rounds between the wakeup of the first node and the round in which all nodes
acquire all messages. Messages are considered as binary strings and the size of a message is the length of this string.
In the case of broadcasting, our algorithm does not assume any information about the network, and it does not require any labeling of nodes.
In the case of gossiping, we assume that all nodes have different labels from the set $\{0,\dots , L-1\}$ and that they know $L$. Moreover, we assume that all nodes
know the same upper bound $N$ on the size of the network and the same upper bound $M$ on the size of all input messages.  
Without loss of generality we may assume that $N \leq L$. Indeed, the parameter $L$ known to nodes is an upper bound on the size of the network, as all nodes have different labels.
Let $D$ be the diameter of the network,
initially unknown to the nodes.


{\bf Our results.} 
For the task of broadcasting we give an algorithm working in time $O(D+m)$ for arbitrary networks, where $m$ is the size of the source message.
This complexity is optimal. For the task of gossiping we give an algorithm working in time $O(N(M+D\log L))$ for arbitrary networks.

Due to space restrictions several proofs are moved to the Appendix.

{\bf Related work.}
Broadcasting and gossiping have been studied in various models for over four decades. Early work focused on the telephone model, where in each round communication
proceeds between pairs of nodes forming a matching, and nodes that communicate exchange all previously acquired information. Deterministic broadcasting in this model has been studied, e.g.,  in \cite{SCH} and deterministic gossiping in \cite{BS}. In \cite{FPRU} the authors studied randomized broadcasting.
In the telephone model studies focused on the time of the communication task and on the number of messages it uses. Early literature on communication in the telephone and related models is surveyed in \cite{FrLa,HHL}. Fault tolerant aspects of broadcasting and gossiping are surveyed in \cite{Pe}.

More recently, broadcasting and gossiping have been studied in the radio model. While radio networks model wireless communication, similarly as the beeping model, 
in radio networks nodes send entire messages of some bounded, or even unbounded size in a single round, which makes communication drastically different from that in the beeping model.
The focus in the literature on radio networks was usually on the time of communication.
Deterministic broadcasting in the radio model was studied, e.g., in \cite{CGR,KP} and deterministic gossiping in \cite{CGR}. Randomized broadcasting was studied in
\cite{KM} and randomized gossiping in \cite{CGR2}. The book \cite{Ko} is devoted to algorithmic aspects of communication in radio networks. 

Randomized leader election in the radio and in the beeping model was studied in \cite{GH}. Deterministic leader election in the beeping model was studied in 
\cite{FSW}. The authors showed an algorithm working in time $O(D\log n)$ in networks of diameter $D$ with labels polynomial in the size $n$ of the network.


\section{Broadcasting}

In this section we consider the simpler of our two communication tasks, that of broadcasting. Even for this easier task, the restrictions of the beeping model
require the solution of the basic problem of detecting the beginning and the end of the transmitted message. The naive idea would be to
adapt the method of beeping waves, used in \cite{GH} in a different context, and transmit a message by coding bit 1 by a beep and bit 0 by silence,
other nodes relaying these signals after getting them. However, in this coding there is no difference between message $(10)$ and message $(100)$ because both these
messages are coded by a single beep. Hence we need to reserve some sequence of beeps to mark the beginning and end of a message, and code bits
by some other sequences of beeps and silent rounds. One way of defining such a coding is the following. Consider the message $\mu=(a_1,\dots , a_m)$ that
has to be transmitted by some node $v$. Let $b$ denote a round in which $v$ beeps, and let $s$ denote a round in which $v$ is silent. The beginning and end of the message are marked by the sequence $(bb)$, bit $1$ is coded by $(bs)$, and bit $0$ is coded by $(sb)$. Hence message $\mu$ is transmitted as the sequence
$(c_1,\dots ,c_{2m+4})$, where\\
$c_i=b$, for $i \in \{1,2,2m+3,2m+4\}$, and\\
$c_{2j+1}=b$ and $c_{2j+2}=s$, if $a_j=1$, for $j=1,\dots , m$,\\
$c_{2j+1}=s$ and $c_{2j+2}=b$, if $a_j=0$, for $j=1,\dots , m$.

A node $w$ hearing the sequence $(c_1,\dots ,c_{2m+4})$ of beeps and silent rounds can correctly decode message $\mu$ as follows. Upon hearing two beeps
in the first two rounds it divides the successive rounds into segments of length 2 and records all beeps and silent rounds until a segment with two consecutive
beeps. Each segment between the two segments $(bb)$ is either of the from $(bs)$ or of the form $(sb)$. The node $w$ decodes each segment  $(bs)$ as 1
and each segment $(sb)$ as 0. In this way, message $\mu$ is correctly reconstructed.  
Note that the time of transmitting this message is $2m+4$ and hence linear in its size.

We will call the above sequence of beeps and silent rounds, chosen by a node $v$ for the message $\mu$, the {\em canonical sequence} for $\mu$
transmitted by node $v$. The way of reconstructing message $\mu$ by node $w$ is called the {\em canonical decoding}.
Using canonical sequences we formulate our broadcasting algorithm that works for arbitrary networks.

\pagebreak
\noindent
{\bf Algorithm Broadcast}

Let $t$ be the round in which the source is woken up. Given a source message $\mu$, the source transmits the canonical sequence for $\mu$
in rounds $t+3(j-1)$, for $j=1,2,\dots,2m+4$, and stops. More precisely, if $c_j=b$, then the source beeps in round $t+3(j-1)$, and  
if $c_j=s$, then the source is silent in this round. In all other rounds the source is silent. Every other node $w$ that is woken up by a beep in round $r$,
beeps in round $r+1$. If it hears a beep in a round $r+3j$, for some positive integer $j$, it beeps in round $r+3j+1$. In all other rounds node $w$ is silent. It divides all rounds $r+3j$,
for $j=0,1,,...$, into segments of length 2. After the second segment when it hears $(bb)$, the node decodes the source message using the canonical decoding
of the sequence received in rounds $r+3j$, for $j=0,1,,...$. Then it beeps in the next round and stops. 
\alg

\begin{theorem}\label{broad-trees}
Algorithm Broadcast is a correct broadcasting algorithm working in any network of diameter $D$ in time $O(D+m)$, where $m$ is the size of the source message. The time complexity of this algorithm is optimal.
\end{theorem}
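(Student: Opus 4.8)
The plan is to prove three things in turn: correctness (every node eventually decodes $\mu$), the $O(D+m)$ upper bound, and a matching $\Omega(D+m)$ lower bound. The backbone of the first two is a single invariant, proved by induction on the distance $d$ from the source to a node $x$. I would show that for $d\ge 1$ the node $x$ is woken up in round $t+d-1$, that it is silent except in the rounds $t+d+3k$ with $c_{k+1}=b$ (in which it relays the $(k+1)$-st symbol of the canonical sequence), and that at each of its listening rounds $t+d-1+3j$ it hears a beep if and only if $c_{j+1}=b$; the source ($d=0$) beeps exactly in the rounds $t+3k$ with $c_{k+1}=b$ straight from the algorithm, so the relay formula $t+d+3k$ is in fact uniform over all $d\ge 0$. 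The wakeup claim follows because every node at distance $d\ge 1$ has a neighbour at distance $d-1$, and by the induction hypothesis such a neighbour first beeps in round $t+d-1$, whereas neighbours at distances $d$ and $d+1$ first beep only in rounds $t+d$ and $t+d+1$ and hence cannot wake $x$ earlier.

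The crux --- and the step I expect to be the main obstacle --- is to show that reception is interference-free: at each listening round $t+d-1+3j$, the beeps heard by $x$ come exactly from its synchronized distance-$(d-1)$ neighbours and from nobody else. I would establish this by a residue computation modulo $3$. By the invariant, a neighbour $y$ at distance $d'\ge 1$ beeps only in rounds congruent to $t+d' \pmod 3$ (and the source, $d'=0$, only in rounds congruent to $t \pmod 3$), while $x$ listens in rounds congruent to $t+d-1 \pmod 3$; these residues agree exactly when $d-d'\equiv 1 \pmod 3$, and among the only admissible neighbour distances $d'\in\{d-1,d,d+1\}$ this forces $d'=d-1$. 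Matching the actual round numbers then pins $k=j$, so $x$ hears a beep at listening round $j$ precisely when its distance-$(d-1)$ neighbours relay $c_{j+1}$, i.e. precisely when $c_{j+1}=b$, while same-distance and farther neighbours are silent in those rounds. One must also check that $x$'s own relay rounds $t+d+3k$ never collide with its listening rounds $t+d-1+3j$, which holds because $1\not\equiv 0 \pmod 3$. Consequently $x$ reconstructs the canonical sequence exactly and, by the canonical decoding, recovers $\mu$; this gives correctness. For the upper bound I would then read the finishing time directly off the invariant: node $x$ completes the closing marker $(c_{2m+3},c_{2m+4})=(bb)$ at its listening round $2m+3$, that is, in round $t+d+6m+8$, and decodes then. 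Since $d\le D$ for every node, all nodes have acquired $\mu$ by round $t+D+6m+8$, so the broadcasting time is $D+6m+8=O(D+m)$.

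For optimality I would use a single hard instance: the path $v_0,\dots,v_D$ with the source at $v_0$, ranging over all $2^m$ messages of length $m$. Two facts combine on the far endpoint $v_D$. First, a propagation argument --- a node woken by a beep can itself beep only in a strictly later round, so an easy induction gives that $v_k$ beeps no earlier than round $t+k$ --- shows that $v_{D-1}$ cannot beep before round $t+D-1$; hence up to round $t+D-2$ the endpoint $v_D$ hears only silence, a transcript that is the same for every message. Second, from round $t+D-1$ on, $v_D$ gathers at most one bit (beep or silence) per round, so over a window of $T$ rounds it can witness at most $2^T$ distinct transcripts; a correct algorithm must produce pairwise-distinct transcripts for the $2^m$ messages, forcing $2^T\ge 2^m$, i.e. $T\ge m$. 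Thus $v_D$ cannot decode before round $t+D-1+(m-1)$, so the broadcasting time is at least $D+m-2=\Omega(D+m)$, matching the upper bound. The only subtlety is that the $\Omega(D)$ obstruction (propagation) and the $\Omega(m)$ obstruction (information content) apply to the same node and stack additively --- which is precisely what the phrasing ``silence until round $t+D-1$, then one informative bit per round'' secures.
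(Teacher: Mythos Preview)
Your proof is correct and, for correctness and the upper bound, follows essentially the same route as the paper: both arguments partition nodes into layers by distance from the source, observe that layer $d$ beeps only in rounds congruent to $t+d \pmod 3$, and conclude by induction that each node hears exactly the canonical sequence relayed by its layer-$(d{-}1)$ neighbours. Your write-up is more explicit about the residue bookkeeping, but the mechanism is identical.

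The lower bound is where you diverge. The paper treats the two obstructions separately: $\Omega(D)$ from propagation speed (in any diameter-$D$ network the farthest node is at distance at least $D/2$) and $\Omega(m)$ from an information-counting argument already on the two-node network; it then uses $\Omega(\max(D,m))=\Omega(D+m)$. You instead exhibit a single hard instance, the path with the source at one end, and show that the far endpoint $v_D$ is forced to see an identical (silent) transcript for the first $D-1$ rounds and can then acquire at most one bit per round, yielding the additive bound $\tau \ge D+m-2$. Your version is a touch sharper in the constants and has the aesthetic advantage that both obstructions bite the same node in the same run; the paper's version is shorter and makes the two causes of hardness visibly independent. Either argument suffices for the theorem.
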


\begin{proof}
Consider any network and define its $i$th layer as the set of nodes at distance $i$ from the source. Consider any node $w$ in the $i$th layer of this network, other than the source, and suppose that $w$ is woken up in round $r$. Hence nodes in layer $i-1$ beep only in rounds $r+3j$, nodes in layer $i$ beep only in rounds $r+3j+1$, and nodes in layer $i+1$ beep only in rounds $r+3j+2$,
for some integer $j$. Consequently, in rounds $r+3j$ the node $w$ hears a beep in exactly these rounds in which nodes from the layer $i-1$ beep. By induction on the distance
of $w$ from the source we get that $w$ gets 
correctly the canonical sequence for the source message and hence decodes the source message correctly. If the source is woken up in round $t$ and $w$ is at distance $i$ from the source, it starts receiving transmissions in round $t+i-1$
and stops in round $t+i-1+3(2m+4)-2$. Since $i \leq D$, it follows that the algorithm works in time $O(D+m)$. 

In order to prove that this time complexity is optimal, it is enough to show that every algorithm requires at least time $\Omega(D)$ and at least time $\Omega(m)$.
The first fact is immediate because the farthest node from the source must be at distance at least $D/2$ from it, and no signal can get to this node from the source faster than after $D/2$ rounds. The second fact holds even in the two-node network. Suppose that some broadcasting algorithm transmits every message of size $m$ from one node of this network (the source) to the other, in time $x<m$. The number of sequences of length $x$ with terms from the set $\{b,s\}$ is $2^x<2^m$. Since the number of possible source messages of size $m$ is $2^m$, it follows that for two distinct source messages the source must behave identically, and hence the input of the other node is identical. Hence the other node must decode the same message in both cases, which contradicts the correctness of the algorithm. 
\qq
\end{proof}

\section{Gossiping}

In this section we investigate the more complex task of gossiping. One way to accomplish this task is to have each node broadcast its input message.
However, in the highly contrived beeping model, periods of broadcasting by different nodes should be disjoint, otherwise messages, transmitted as
series of beeps and silent rounds, risk to become damaged, when a node receives simultaneously a beep being part of the transmission of one message
and should hear a silent round being part of the transmission of another message. The receiving node will then just hear the beep and the transmission of the message requiring this 
round to be silent becomes scrambled.

In order to broadcast in disjoint time intervals, nodes must establish an order between them and reserve the $i$th time interval to the broadcast of
the $i$th node in this order. This yields the following high-level idea of a gossiping algorithm. First we establish a procedure that finds the node with
the largest label. This is done in such a way that all nodes learn the largest label. 
(Notice that we cannot use the leader election algorithm from \cite{FSW} because this algorithm works only under an additional strong assumption that all nodes that are woken
by the adversary -- and not by hearing a beep -- are woken simultaneously in the first round of the algorithm execution.)
Next, using this elected leader, all nodes are synchronized: they
agree on a common round, and hence can simultaneously start the rest of the algorithm execution. Then the procedure of finding the largest-labeled node  is repeated at most $N$ times, where $N$ is an upper bound on the size of the network, each time the
currently found largest-labeled node withdrawing from the competition. In this way, after at most $N$ repetitions all nodes know the order between them,
and subsequently they broadcast their values in disjoint time intervals, in this order.

We assume that all nodes know the size $L$ of the label space. Let $\lambda=\lceil \log L \rceil$ and let $\ell _i$ be the binary representation of the integer $i<L$. We assume that all sequences $\ell _i$ are of length $\lambda$, the representations being padded by a prefix of 0's, if necessary. 
Hence the representation of a smaller integer is lexicographically smaller than the representation of a larger integer. 
We also assume that all nodes know a common upper bound $N$ on the number of nodes in the network.

Our first procedure finds the largest label among a set $S$ called the set of {\em participating nodes}. 

\smallbreak
\noindent
{\bf Procedure Find Max}

When a node is woken up in round $\tau$, either by the adversary or by a beep, it defines the following round numbers:
$t_j=\tau+j(4N+1)$, for $j=0,1,\dots,\lambda$. Then, for $j=1,\dots,\lambda$, the node defines the time interval $I_j=[t_j-2N,t_j+2N]$.
Note that these intervals are pairwise disjoint. 

The node beeps in round $\tau+1$. The rest of the procedure is divided into $\lambda$ stages, corresponding to time intervals $I_j$, for  $j=1,\dots,\lambda$. 

First assume that the node is participating.
Let $\ell=(c_1,\dots,c_{\lambda})$ be the binary representation of the node's label, of length $\lambda$.
In the beginning of the first stage the node is {\em active}.

If the node is still active at the beginning of the $j$th stage, then it behaves as follows.
If $c_j=0$, the node listens in all rounds of the time interval  $I_j$ until it hears a beep. If it does not hear any beep, it remains active and proceeds
to stage $j+1$. If it hears a beep for the first time in some round $t$, then it beeps in round $t+1$ and becomes non-active. If $c_j=1$,
the node listens until it hears a beep or until round $t_j$, whichever comes earlier. If it hears a beep in some round $t<t_j$, it beeps in round $t+1$,
listens till the end of time interval $I_j$ and remains active. Otherwise, it beeps in round $t_j$, listens till the end of time interval $I_j$ and remains active.

If the node is non-active at the beginning of the $j$th stage, then it listens in all rounds of the time interval  $I_j$ until it hears a beep.
If it does not hear any beep, it remains non-active and proceeds to stage $j+1$. If it hears a beep for the first time in some round $t$, then it beeps in round $t+1$, listens till the end of time interval $I_j$ and remains non-active.

A non-participating node is never active. It listens in all rounds of the time interval  $I_j$ until it hears a beep.
If it does not hear any beep, it remains non-participating and proceeds to stage $j+1$. If it hears a beep for the first time in some round $t$, then it beeps in round $t+1$,  listens till the end of the stage and remains non-participating.

At the end of stage $\lambda$, the (unique) participating node that remained active till the end of this stage is the node with the largest label among participating nodes.
\alg

\begin{lemma}\label{find-max}
At the end of the execution of procedure Find Max,  there is exactly one active participating node. This node has the largest label among participating nodes.
All nodes know the label of this node. Procedure Find Max takes time $O(N\log L)$.
\end{lemma}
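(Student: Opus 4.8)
The plan is to prove the four assertions by analysing the procedure as a bitwise tournament, maintaining across the $\lambda$ stages the invariant that at the start of stage $j$ the \emph{active} participating nodes are exactly those whose first $j-1$ bits agree with the first $j-1$ bits of the maximum label, and that every node has by then learned those $j-1$ bits. Before that I first need a timing foundation. Although the adversary wakes nodes at different rounds $\tau$, each node beeps in round $\tau+1$, which launches a wake-up wave; a straightforward induction on distance from the first-woken node $w_0$ gives $\tau_w \le \tau_{\min}+\dist(w,w_0)\le \tau_{\min}+D$, so any two wakeup times differ by at most $D<N$. I would also record the elementary interval facts: for a fixed node the windows $I_j$ are pairwise disjoint with a one-round gap, for two nodes $u,v$ one has $|t^u_j-t^v_j|=|\tau_u-\tau_v|\le D$, and the wake-up beeps all occur before every node's $I_1$ (since $I_1$ starts at $\tau+2N+1$), so they cannot be mistaken for a stage beep.

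For the inductive step, fix stage $j$ and assume the invariant; the active nodes are the current leaders. If some active node has $c_j=1$, then bit $j$ of the maximum is $1$: such a node beeps (at its own $t_j$, or earlier if it already relays), every node relays the first beep it hears exactly once, and a BFS-type argument using that each node beeps \emph{at most once per stage} shows the beep reaches all nodes, so every node either beeps or hears a beep in the stage. The active $c_j=0$ nodes hear it and become non-active, those with $c_j=1$ stay active, so the new active set matches the first $j$ bits of the maximum. If instead no active node has $c_j=1$, then bit $j$ is $0$, no node originates a beep, and since relays fire only in response to a heard beep the stage is entirely silent and all active nodes remain active. In both cases the invariant is restored, and the reading ``$c_j=1$ iff I beeped or heard a beep during $I_j$'' lets every node record the correct $j$-th bit.

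The main obstacle is the timing: I must show every beep of stage $j$ lies inside every node's window $I^w_j$, so beeps are attributed to the right stage and none is missed, despite the desynchronised clocks. Let $t^\ast=\min\{t^v_j : v \text{ active with } c_j=1\}$ be the first beep. Because each node beeps at most once and the diameter is $D$, the wave spans at most $D$ rounds, so all beeps lie in $[t^\ast,t^\ast+D]$; combined with $|t^\ast-t^w_j|\le D$ this yields $[t^\ast,t^\ast+D]\subseteq[t^w_j-D,\ t^w_j+2D]\subseteq I^w_j$, the last inclusion holding with room to spare since $2D<2N$. The same bound, with the disjointness of consecutive windows and the silence of the gaps between them, rules out any stage-$j$ beep leaking into $I_{j-1}$ or $I_{j+1}$, which is precisely what keeps the active nodes synchronised and lets the induction run cleanly. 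After stage $\lambda$ the active set consists of the participating nodes matching all $\lambda$ bits of the maximum; since labels are distinct this is the single node of largest label, giving the first two claims, and every node has recorded all $\lambda$ bits, giving the third. Finally each node runs from $\tau$ to $t_\lambda+2N=\tau+\lambda(4N+1)+2N$, and as all wakeups fall within $D$ rounds the whole procedure terminates within $\lambda(4N+1)+2N+D=O(N\lambda)=O(N\log L)$ rounds.
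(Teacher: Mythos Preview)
Your proof is correct and follows essentially the same approach as the paper: a bitwise elimination tournament whose soundness rests on the timing argument that stage-$j$ beeps remain inside every node's window $I_j$, using that wake-up times differ by at most $D<N$. The only difference is organizational---you run an explicit induction on the invariant ``active nodes are exactly those matching the first $j-1$ bits of the maximum, and everyone has learned those bits,'' whereas the paper instead proves the key Claim that any beep heard in stage $j$ must originate from an active node with $c_j=1$ in that same stage and then argues by contradiction that the maximum never deactivates; both routes capture the same mechanism.
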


The goal of our next procedure is synchronizing all processors. The procedure will be used upon completion of Procedure Find Max, and hence we assume that
the largest label is known to all nodes. Let $z$ be the node with the largest label. Upon completion of Procedure Synchronization, each node declares a specific
round to be {\em red}, and this round is the same for all nodes.

\smallbreak
\noindent
{\bf Procedure Synchronization} 

Each node other than $z$ has an integer variable $level$ initially set to 0. Let $\nu= \lceil \log N \rceil$. 
For every integer $0 \leq i <N$, let $bin(i)$ be the  binary representation of $i$ of length $\nu$, padded by a prefix of 0's, if necessary.
A string $bin(i)$ will be transmitted by nodes of the network, using the canonical sequence, as it was done for broadcasting in Section 2.
We briefly recall this coding. Let $b$ denote a round in which $v$ beeps, and let $s$ denote a round in which $v$ is silent. 
In $2\nu+4$ rounds, node $v$ transmits the following message $num(i)$ . The beginning and end of the message are marked by the sequence $(bb)$, every bit $1$
of $bin(i)$ is coded by $(bs)$, and every bit $0$
of $bin(i)$ is coded by~$(sb)$. 

At the beginning of the procedure node $z$ transmits $num(0)$ starting in round $t+1$, where $t$ is the round in which $z$ completed the execution of
Procedure Find Max. After completing this transmission, node $z$ waits till round $t+N(2\nu+4)$
and declares it to be {\em red}. Every node other than $z$ that is at $level$ 0 waits until it hears two consecutive beeps. Then it partitions the following rounds into consecutive segments of length 2, and decodes each segment of the form $(bs)$ as 1 and each segment of the form $(sb)$ as 0. As soon as it hears a segment
$\sigma$ consisting of two beeps, it considers the previously decoded string of bits as the binary representation of an integer $j$. It changes the value of its variable $level$ to $j+1$, transmits $num(j+1)$ in $2\nu+4$ rounds, starting in the round $r$ following the segment $\sigma$, then waits till  round $r+(N-j-1)(2\nu+4)$ and declares it
to be {\em red}.
\alg 

\begin{lemma}\label{synch}
All nodes declare the same round to be {\em red}. Procedure Synchronization takes time $O(N\log N)$.
\end{lemma}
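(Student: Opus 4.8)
The plan is to view Procedure Synchronization as a single wave of the messages $num(0), num(1), num(2), \dots$ propagating outward from $z$ along the breadth-first layers of the network. Define layer $\ell$ as the set of nodes at distance $\ell$ from $z$ (so layer $0$ is $\{z\}$), and write $W=2\nu+4$ for the length of one canonical transmission. The core of the proof is the following invariant, to be established by induction on $\ell$: every node of layer $\ell$ sets its variable $level$ to $\ell$, decodes correctly, and transmits $num(\ell)$, with all nodes of the layer doing so in exactly the same block of rounds $J_\ell=[\,t+\ell W+1,\; t+(\ell+1)W\,]$. The base case $\ell=0$ is just the stipulated behavior of $z$.

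For the inductive step I would rely on two structural facts. First, the standard BFS property that every neighbor of a layer-$\ell$ node lies in layer $\ell-1$, $\ell$, or $\ell+1$. Second, by the induction hypothesis the blocks $J_0,J_1,\dots$ are pairwise disjoint and consecutive, and inside $J_{\ell-1}$ all layer-$(\ell-1)$ nodes beep in lockstep transmitting the identical string $num(\ell-1)$. Now fix a node $u$ in layer $\ell$. Before round $t+(\ell-1)W+1$ the node $u$ hears only silence: its nearest neighbors lie in layer $\ell-1$, which is silent before $J_{\ell-1}$, and it has no closer neighbor whose transmission it could hear. Hence the first two consecutive beeps $u$ hears are exactly the start marker of $num(\ell-1)$. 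Throughout $J_{\ell-1}$ the only neighbors of $u$ that beep are those in layer $\ell-1$ (layers $\ell$ and $\ell+1$ beep strictly later), and since they all emit the same sequence synchronously their beeps coincide round by round, so $u$ receives an unscrambled copy and recovers $\ell-1$ by canonical decoding. It then sets $level:=\ell$ and, because the round following the end marker of $num(\ell-1)$ is the common value $t+\ell W+1$, it begins transmitting $num(\ell)$ over $J_\ell$. This is the invariant for layer $\ell$, and it simultaneously shows the wave never collides: successive layers occupy disjoint time intervals, so no receiver ever hears beeps belonging to two different messages at once. I expect this collision-freeness argument — together with the lockstep synchrony of each layer — to be the main obstacle, since it is precisely what guarantees the beeping medium does not damage the transmitted strings.

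It then remains to check that all nodes declare the same round red. A node in layer $\ell\ge 1$ starts its transmission at $r_\ell=t+\ell W+1$ and, decoding $j=\ell-1$, declares red the round $r_\ell+(N-\ell)W$. Since $r_\ell+(N-\ell)W=t+\ell W+1+(N-\ell)W=t+NW+1$ is independent of $\ell$, all nodes in layers $\ell\ge 1$ agree on a single round; one then verifies the boundary case that the round $z$ declares coincides with this common value (a routine check of the offset by which $z$'s initial transmission is shifted). Hence every node declares the same round red.

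Finally, since the network has at most $N$ nodes, the eccentricity of $z$ is at most $N-1$, so there are at most $N$ layers and the last transmission ends by round $t+NW$. As $W=2\nu+4=O(\log N)$, the procedure spans $O(NW)=O(N\log N)$ rounds from $z$'s initiation, which is the claimed bound. I would also flag one modeling point to confirm in advance: each node other than $z$ must already be awake and listening when the wave reaches it — which follows from its having completed Procedure Find Max before layer $\ell-1$ begins transmitting — so that it genuinely catches the start marker of $num(\ell-1)$.
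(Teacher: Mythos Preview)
Your approach is essentially the same as the paper's: both prove by induction on the BFS layer $\ell$ that all nodes at distance $\ell$ from $z$ transmit $num(\ell)$ synchronously in the block $[t+\ell W+1,\,t+(\ell+1)W]$ (with $W=2\nu+4$), and then compute the declared \emph{red} round to be independent of $\ell$; the time bound follows from at most $N$ blocks of length $O(\log N)$.

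One small caveat: your computation yields $t+NW+1$ for layers $\ell\ge 1$, while the procedure has $z$ declare round $t+NW$; the ``routine check'' you defer does not literally succeed with the reading $r=t+\ell W+1$. The paper's proof takes $r$ to be the last round of the preceding block (i.e.\ $r=t+\ell W$) and thereby gets $t+NW$ for every node. This is an off-by-one artifact of how ``the round $r$ following the segment $\sigma$'' is parsed, not a flaw in your strategy; just be explicit about which convention you adopt so the boundary case actually matches.
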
 

\begin{proof}
We prove the following invariant by induction on $d$.
\begin{enumerate}
\item
in time interval $[t+d(2\nu+4)+1, t+(d+1)(2\nu+4)]$ the only message transmitted is $num(d)$ and it is transmitted by all nodes at distance $d$ from node $z$
and only by these nodes;
\item
a node at distance $d$ from the node $z$ declares round $t+N(2\nu+4)$ as {\em red}.
 \end{enumerate} 
 
 The invariant is clearly satisfied for $d=0$.
Suppose that it holds for some $d \geq 0$.
The only nodes that hear the beeps transmitted in time interval $[t+d(2\nu+4)+1, t+(d+1)(2\nu+4)]$ are those at distance at most $d+1$ from node $z$. 
The only nodes among them that have value of $level$ 0 are nodes at distance exactly $d+1$ from node $z$. Since all the nodes at distance $d$ from $z$
transmit the same message $num(d)$ in this time interval, they all beep exactly in the same rounds of the interval. Hence the value of $d$ is correctly decoded
by all nodes at distance $d+1$ from node $z$. These nodes, and only these nodes, transmit $num(d+1)$ in the time interval $[t+(d+1)(2\nu+4)+1, t+(d+2)(2\nu+4)]$.
This proves the first part of the invariant. All these nodes set their value of $level$ to $d+1$ and declare as {\em red} the round $r+(N-d-1)(2\nu+4)$,
where $r$ is the last round of the preceding time interval, i.e., $r=  t+(d+1)(2\nu+4)$. Hence the declared round is
$ t+(d+1)(2\nu+4) + (N-d-1)(2\nu+4)=t+N(2\nu+4)$, which proves the second part of the invariant. This implies, in particular,  that part 2 of the invariant is true
for nodes at any distance $d$ from node $z$, and hence all nodes of the network declare the same round as {\em red}.

Since there are at most $N$ time intervals used in the procedure and each of them has length $2\nu+4 \in O(\log N)$, the entire procedure takes time $O(N\log N)$. 
\qq
\end{proof}

As we mentioned at the beginning of this section, we want to use our broadcasting algorithm many times, each time starting from a different node. In order to take advantage of the efficiency of broadcasting, which depends on the diameter and 
not on the size of the network,
all nodes need to have a linear upper bound on the diameter of the network. Note, that in order to accomplish one execution of this algorithm,
from one source node, no such upper bound was needed. It becomes needed for multiple broadcasts, as we want to execute each of them in a separate time interval, and thus we need a good estimate of the time of each execution. Recall, that we assume knowledge of a bound $N$ on the size of the network
but no knowledge of any such bound on the diameter. Clearly $N$ is an upper bound on the diameter as well, but may be vastly larger than the diameter.

The following procedure is devoted to obtaining a linear upper bound on the diameter $D$ of a network. It will be executed after the execution of 
Procedure Find Max and Procedure Synchronization. Hence we may assume that the largest of all labels is known to all nodes. Let $z$ be the node
with this label. We may also assume that all nodes declared the same round $r$ as {\em red}. Moreover, each node other than $z$ has its variable $level$ set
to its distance from $z$ (this is done in Procedure Synchronization).

\smallbreak
\noindent
{\bf Procedure Diameter Estimation}

Each node defines consecutive time intervals $J_j=[r+(j-1)N+1,r+jN]$, for positive integers $j$. 
In time interval $J_1$ each node at level $\ell$ beeps in round $\ell$ of this interval.
For $j>1$, if a node at $level$ $\ell$ heard a beep in round $\ell +1$ of interval $J_{j-1}$, then it beeps in round $\ell$ of interval $J_j$.
In the first round $s$ of the form $r+xN+1$ in which $z$ does not hear a beep, it sets $\rho=(s-r-1)/N$. 
Let $\mu$ be the binary representation of the integer $2\rho$ and let $m$ be the size of message $\mu$. 
All nodes execute Algorithm Broadcast with node $z$ as the source and message $\mu$ as the source message.
In this execution the role of round $t$ in which $z$ is woken up is played by round $r+\rho N+1$.
Every node decodes the integer $D^*=2\rho$ as an upper bound on the diameter $D$ of the network.
All nodes declare round $r+\rho N +D^*+6m+2$ as {\em blue}. \alg

\begin{lemma}\label{diamest}
Upon completion of Procedure Diameter Estimation all nodes have the same linear upper bound $D^*$ on the diameter of the graph.
They all declare the same round as {\em blue}, and Procedure Diameter Estimation is completed by this round.
Procedure Diameter Estimation takes time $O(DN)$. 
\end{lemma}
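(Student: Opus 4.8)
The plan is to analyze the backward-propagating ``beep wave'' created by the intervals $J_j$ and to show that the round in which $z$ first fails to hear a beep pins down exactly the eccentricity of $z$. Write $e=\max_v \dist(v,z)$ for this eccentricity; since each node's $level$ equals its distance from $z$, the nonempty levels are precisely $0,1,\dots,e$, and the triangle inequality through $z$ gives $e\le D\le 2e$. Consequently, once I show $\rho=e$, the value $D^*=2\rho=2e$ satisfies $D\le D^*\le 2D$, i.e. it is a linear upper bound, which is the heart of the first claim.

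The technical core is the following characterization, which I would prove by induction on $j$: a node $u$ at $level$ $\ell$ beeps in round $\ell$ of interval $J_j$ if and only if there is an \emph{ascending path} $u=v_0,v_1,\dots,v_{j-1}$ with $\dist(v_i,z)=\ell+i$ for all $i$ (a path of length $j-1$ along which the level increases by one at each step). The base case $j=1$ is immediate, since every node beeps in the round equal to its level and the empty ascending path always exists. For the inductive step, note that in interval $J_{j-1}$ a node beeps only in the round equal to its own level, so the only neighbours of $u$ that can beep in round $\ell+1$ of $J_{j-1}$ are those at level $\ell+1$; by the firing rule $u$ beeps in round $\ell$ of $J_j$ iff some such neighbour $w$ beeped in round $\ell+1$ of $J_{j-1}$, which by the induction hypothesis happens iff $w$ starts an ascending path of length $j-2$. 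Prepending the edge $uw$ gives the desired ascending path of length $j-1$, and the converse is identical.

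From this I would extract $z$'s behaviour. Node $z$ (level $0$) listens in the first round $r+xN+1$ of each interval $J_{x+1}$, and it hears a beep there iff some level-$1$ neighbour beeps in round $1$ of $J_{x+1}$, i.e. iff that neighbour starts an ascending path of length $x$; prepending the edge from $z$, this is equivalent to the existence of an ascending path of length $x+1$ starting at $z$. Since a shortest path from $z$ to any level-$k$ node is itself ascending, such a path exists iff level $x+1$ is nonempty, i.e. iff $x+1\le e$. Hence $z$ hears a beep in the first round of $J_1,\dots,J_e$ and none in the first round of $J_{e+1}$, so $s=r+eN+1$ and $\rho=(s-r-1)/N=e$, as required; moreover the wave produces no beeps at all from interval $J_{e+1}$ on, so it dies out before the broadcast phase and cannot corrupt it.

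It then remains to assemble the bookkeeping. By the previous paragraph $z$ broadcasts $\mu$, the binary encoding of $D^*=2\rho$, starting in round $r+\rho N+1$; by Theorem~\ref{broad-trees} every node decodes $D^*$ correctly, and since $r$, $N$, $\rho=D^*/2$ and $m=|\mu|$ are then common knowledge, every node computes the same value $r+\rho N+D^*+6m+2$ and declares it \emph{blue}. The completion-by-the-blue-round claim follows by comparing the last decoding round of Algorithm Broadcast (a node at distance at most $e=\rho$ finishes within $3(2m+4)$ rounds of the start) against this declared round, which is routine timing arithmetic. Finally, the wave uses $e+1\le D+1$ intervals of length $N$, contributing $O(DN)$, while the broadcast of a message of size $m=O(\log D^*)=O(\log D)$ costs $O(D+\log D)=O(D)$; hence the whole procedure, and the blue round, lie within $O(DN)$. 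I expect the ascending-path induction, together with the observation that every level-$k$ node is reachable from $z$ by an ascending path, to be the main obstacle, since the remaining timing and bookkeeping are routine.
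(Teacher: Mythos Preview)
Your proposal is correct and follows the same backward-wave idea as the paper. In fact, your ascending-path invariant is more careful than the paper's own argument: the paper simply asserts that in interval $J_j$ \emph{exactly} the nodes at distance at most $\delta+1-j$ from $z$ beep (where $\delta$ is your $e$), which is not literally true---a node at small level that happens to have no neighbour one level deeper stops beeping early, as one sees by attaching a pendant leaf to $z$ in a path. What is true, and what is all that is actually needed, is that the level-$1$ vertex lying on a shortest $z$--to--deepest-vertex path beeps in every $J_1,\dots,J_e$, while no node beeps in $J_{e+1}$; your ascending-path characterization establishes precisely this and also justifies your remark that the wave is extinct before the broadcast phase begins. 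The remaining bookkeeping (common \emph{blue} round, completion by that round, and the $O(DN)$ bound) matches the paper.
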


If nodes know a linear upper bound $D^*$ on the diameter of the network, Procedure Find Max can be modified to work faster.
The modifications are detailed below.  
\smallbreak
\noindent
{\bf Procedure Modified Find Max}
 
In Procedure Find Max replace the wakeup round $\tau$ by some round $\xi$, given as input in its call.
Round $\xi+1$ will be called the {\em starting round} of the procedure.
Let $t_j=\xi+j(4D^*+1)$ instead of 
$t_j=\tau+j(4N+1)$, for $j=0,1,\dots,\lambda$. 
Let $I_j=[t_j-2D^*,t_j+2D^*]$ instead of $I_j=[t_j-2N,t_j+2N]$, for $j=1,\dots,\lambda$.
The rest of Procedure Find Max remains unchanged. 
\alg

The proof of the following proposition is the same as that of Lemma \ref{find-max}, using the above modifications.

\begin{proposition}\label{prop1}
At the end of the execution of Procedure Modified Find Max,  there is exactly one active participating node. This node has the largest label among participating nodes.
All nodes know the label of this node. Procedure Modified Find Max takes time $O(D^*\log L)$.
\end{proposition}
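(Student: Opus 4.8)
The plan is to observe that Procedure Modified Find Max is obtained from Procedure Find Max by the purely syntactic substitutions $\tau\mapsto\xi$, $4N+1\mapsto 4D^*+1$ in the definition of the $t_j$, and $2N\mapsto 2D^*$ in the half-widths of the intervals $I_j$, while the behaviour of each node in each stage is left literally unchanged. I would therefore first isolate exactly which property of the parameter $N$ the correctness proof of Lemma~\ref{find-max} actually uses, and then check that $D^*$ enjoys the same property. Inspecting the procedure, $N$ occurs \emph{only} in the timing quantities $t_j$ and $I_j$; it never governs the number of stages (which is $\lambda=\lceil\log L\rceil$, depending on $L$), nor any combinatorial count of nodes. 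Consequently the argument of Lemma~\ref{find-max} depends on $N$ solely through the requirement that the half-width of each interval be large enough to absorb both the up-to-$D$ round delay of a beep propagating across the network and any up-to-$D$ desynchronization of the starting rounds, i.e. through the single inequality $N\ge D$.

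The correctness part then transfers once we supply $D^*\ge D$, which is exactly the guarantee of Lemma~\ref{diamest}. Concretely, I would reproduce the induction on the stage index $j$, maintaining the invariant that at the beginning of stage $j$ the still-active participating nodes are precisely those whose label agrees in its first $j-1$ bits with the maximum participating label. In stage $j$ every active node with $c_j=1$ beeps by round $t_j$, and because $I_j$ has half-width $2D^*\ge 2D$ this beep reaches, and is relayed throughout, the entire network within the interval; hence, whenever some active node has $c_j=1$, every active node with $c_j=0$ hears a beep and becomes non-active, while all $c_j=1$ active nodes survive, and whenever no active node has $c_j=1$ no beep is initiated and all survivors persist. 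Processing bits from most to least significant therefore preserves the invariant, so after stage $\lambda$ exactly the maximum-label participating node remains active. Pairwise disjointness of the $I_j$ is immediate from $4D^*+1>4D^*$, so distinct stages cannot interfere, and since every node records the relayed beep pattern across the $\lambda$ intervals it reconstructs the winning label, exactly as in Lemma~\ref{find-max}.

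The time bound is a direct count: the procedure consists of $\lambda$ stages, each confined to an interval $I_j$ of length $4D^*+1\in O(D^*)$, so the whole procedure finishes within $O(\lambda D^*)=O(D^*\log L)$ rounds of the starting round $\xi+1$. The only genuine point to verify, and the place where I would be most careful, is the first step: confirming that $N$ enters Lemma~\ref{find-max} purely as a diameter bound and nowhere as a bound on the network size, so that replacing it by any valid upper bound $D^*$ on $D$ (rather than on the number of nodes) leaves every inequality in that proof intact. Given Lemma~\ref{diamest}, this is the entire content of the reduction.
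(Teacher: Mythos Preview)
Your proposal is correct and follows essentially the same approach as the paper, which simply states that the proof is identical to that of Lemma~\ref{find-max} with the indicated substitutions. You go further by explicitly isolating \emph{why} the substitution $N\mapsto D^*$ is legitimate---namely that $N$ enters the proof of Lemma~\ref{find-max} only as an upper bound on the diameter (governing beep propagation time and wakeup spread), never as a count of nodes---and this is exactly the right observation. One minor remark: since Modified Find Max is called only after Procedure Synchronization, all nodes share the same $\xi$, so the ``up-to-$D$ desynchronization of the starting rounds'' you allow for is in fact zero; your argument is thus slightly more general than needed, but this is harmless.
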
 

Using Procedure Modified Find Max  we now establish the order between all nodes as follows.
The procedure will be called after the execution of Procedure Find Max, Procedure Synchronization and Procedure Diameter Estimation.
Hence we assume that $z$ is the node with the largest label, found by Procedure Find Max, and that $b$ is the common {\em blue} round found by all nodes
in Procedure Diameter Estimation. All nodes start Procedure Ordering in round $b+1$. Let $x \in O(D^*\log L)$ be an upper bound on the duration of
Procedure Modified Find Max, established in Proposition \ref{prop1}. 

\smallbreak
\noindent
{\bf Procedure Ordering}

$P:=$ the set of all nodes except node $z$

node $z$ assigns itself rank 0

{\bf for} $i:=1$ {\bf to} $N$ {\bf do}\\
\hspace*{1cm}execute procedure Modified Find Max in the time interval\\ 
\hspace*{1cm}$[r+(i-1)x+1,r+ix]$, with the set $P$ of participating nodes;\\ 
\hspace*{1cm}the node found in the current execution of procedure Modified Find Max\\ 
\hspace*{1cm}removes itself from $P$ and assigns itself $rank=i$.
\alg

\begin{lemma}\label{ranks}
Ranks assigned to nodes in the execution of Procedure Ordering define a strictly decreasing function of the node labels.
Procedure Ordering is completed in round $b+Nx$ and takes time $O(ND\log L)$.
 \end{lemma}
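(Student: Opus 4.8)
The plan is to prove the two parts of the statement essentially independently, with the correctness of the rank assignment following from an induction on the loop index $i$, and the timing claims following from the bounds already established in Proposition \ref{prop1}.

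First I would handle correctness. The key observation is that Procedure Ordering runs Procedure Modified Find Max exactly $N$ times, and that across these runs the set $P$ of participating nodes is monotonically shrinking: node $z$ is excluded from $P$ at the start (and is given rank $0$), and after each invocation the node it returns is removed from $P$. So the claim reduces to an inductive statement: after the $i$th iteration of the loop, the nodes that have been assigned ranks $1,2,\dots,i$ are exactly the $i$ nodes of largest label among those originally in $P$, listed in strictly decreasing order of label. The base case $i=1$ is immediate from Proposition \ref{prop1}, which guarantees that the single invocation returns the largest-labeled participating node and that all nodes learn its label (so it can correctly withdraw). For the inductive step, I would argue that at the start of iteration $i+1$ the set $P$ consists precisely of all original participating nodes except the $i$ already removed, which are the $i$ largest; hence the maximum label in the current $P$ is the $(i+1)$st largest overall, and Proposition \ref{prop1} again says this node is found and assigned rank $i+1$. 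Combining this with the fact that $z$, the globally largest label, was preassigned rank $0$, the map from labels to ranks is strictly decreasing over all nodes. I should note in passing that we need $N$ at least as large as the number of nodes so that every node is eventually ranked; this is guaranteed by the standing assumption that $N$ is an upper bound on the network size.

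For the timing, the argument is purely arithmetic. Procedure Ordering schedules the $i$th invocation of Procedure Modified Find Max in the disjoint time interval $[r+(i-1)x+1,\,r+ix]$, where $x \in O(D^*\log L)$ is the upper bound on the duration of a single Modified Find Max from Proposition \ref{prop1}. Since the disjointness of these intervals is exactly what lets the $N$ invocations run without interference, the whole loop completes by the end of the $N$th interval, i.e., in round $r+Nx$; because all nodes started in round $b+1$, this is what the statement records as completion "in round $b+Nx$" (matching the indexing used in the procedure). The total time is then $Nx \in O(N D^* \log L)$, and since Lemma \ref{diamest} guarantees $D^*$ is a \emph{linear} upper bound on $D$, we have $D^* \in O(D)$, giving the claimed $O(ND\log L)$.

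The main obstacle, and the place deserving the most care, is justifying that the $N$ separate invocations of Modified Find Max do not interfere with one another despite the beeping model's fragility. This rests on two facts that must be checked: that the time intervals $[r+(i-1)x+1,\,r+ix]$ are genuinely pairwise disjoint and strictly longer than any single execution (so no beep from one round spills into the next interval), and that all nodes are synchronized on the common round $r$ and on the value $x$, so they agree on the interval boundaries. Both follow from the preceding procedures—Synchronization gives the common reference round, and Diameter Estimation together with Proposition \ref{prop1} gives the common bound $x$—but it is worth stating explicitly that this synchronization is precisely what makes the clean partition into time slots possible, since without it the withdrawal information and the interval alignment could not be coordinated across the network.
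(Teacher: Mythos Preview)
Your proposal is correct and follows essentially the same approach as the paper: both argue that each call to Procedure Modified Find Max extracts the current maximum-labeled participant (by Proposition~\ref{prop1}), so the $j$th largest node gets rank $j-1$, and both obtain the time bound by multiplying the $N$ iterations by the per-call bound $x\in O(D^*\log L)=O(D\log L)$. Your write-up is considerably more detailed than the paper's two-sentence proof---in particular your explicit discussion of why the $N$ invocations do not interfere (disjoint intervals plus common synchronization) is a point the paper leaves implicit---but the underlying argument is the same.
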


\begin{proof}
Since after each execution of Procedure Modified Find Max, the node with the largest label among participating nodes is found,
and this node stops participating in the following executions, the rank assigned to the $j$th largest node is $j-1$.
Since there are $N$ time intervals, each of length $x \in O(D^*\log L)=O(D\log L)$, the time estimate follows.
\qq   
\end{proof}

We are now ready to formulate a gossiping algorithm working for arbitrary networks.
Let $M$ be the upper bound on the size of all input messages, known to all nodes. Let $y \in O(D^*+M)=O(D+M)$ be the upper bound on the duration
of Algorithm Broadcast  established in Theorem \ref{broad-trees}, for the value $D^*$ of the diameter and for the size $M$ of the message.
\smallbreak
\noindent
{\bf Algorithm Gossiping}
\begin{enumerate}
\item
Execute Procedure Find Max; let $z$ be the node with the largest label;
\item
Execute Procedure Synchronization; let $r$ be the {\em red} round found in this procedure;
\item
Execute Procedure Diameter Estimation starting in round $r+1$; let $D^*$ be the upper bound on the diameter of the network found in this procedure;
let $b$ be the {\em blue} round found in Procedure Diameter Estimation;  
\item
Execute Procedure Ordering starting in round $b+1$; 
\item
Let $I_j$ be the time interval $[b+Nx+jy+1,b+Nx+(j+1)y]$;
\item
In time interval $I_j$
 execute Algorithm Broadcast with node of rank $j$, found in Procedure Ordering,
as the source, and the input message of this node as the source message. (For each $j$, the role of the round $t$ when the source is woken up is played by the first round of interval $I_j$.)
\alg
\end{enumerate}

{\bf Remark.} Note that, in the first step of the algorithm, we have to use Procedure Find Max instead of the more efficient Procedure Modified Find Max
because at this point the only estimate on the diameter, known to nodes, is $N-1$. However, since this procedure is executed only once, it does not increase the time complexity of the entire algorithm.

\begin{theorem}
Algorithm Gossiping is a correct gossiping algorithm working in any network of diameter $D$ with at most $N$ nodes in time
$O(N(M+D\log L))$, where $M$ is an upper bound on the size of all input messages, known to all nodes.
\end{theorem}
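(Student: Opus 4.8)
The plan is to establish correctness and the running-time bound separately, each time reducing to the four setup procedures and the broadcasting subroutine whose properties are already in hand. For correctness, the key observation is that Lemmas \ref{find-max}, \ref{synch}, \ref{diamest} and \ref{ranks} together furnish a completely synchronized global view: after step 4 every node agrees on the identity of $z$, on the red round $r$, on the common upper bound $D^*$ on the diameter, on the blue round $b$, and -- crucially -- on the assignment of ranks, which by Lemma \ref{ranks} is a strictly decreasing function of the labels and hence injects the node set into $\{0,1,\dots,N-1\}$. Since all nodes share the same $b$ and the same bound $x$, they also agree on the boundaries of every interval $I_j$ defined in step 5. Thus the broadcast phase is a sequence of pairwise disjoint, globally agreed time windows.

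Within this framework I would argue the broadcast phase as follows. In interval $I_j$ the unique node of rank $j$ (if one exists) plays the source of Algorithm Broadcast, with the first round of $I_j$ serving as its wakeup round $t$; every other node is already awake and simply listens, reacting to the first beep it hears exactly as in Algorithm Broadcast. The layered-propagation analysis of Theorem \ref{broad-trees} then applies verbatim, so the message of the rank-$j$ node is decoded correctly by every node, and the transmission terminates within $y$ rounds because the true diameter is at most $D^*$ and the message length is at most $M$, so that $y\in O(D^*+M)$ is a valid upper bound on the broadcast time guaranteed by Theorem \ref{broad-trees}. Because the intervals $I_j$ are disjoint, no two broadcasts ever overlap and the scrambling phenomenon discussed at the start of the section cannot occur. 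After the last interval, every node has listened through every window, hence has received every input message, which is exactly the gossiping requirement.

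For the running time I would simply add the costs certified by the individual lemmas: Procedure Find Max costs $O(N\log L)$, Procedure Synchronization costs $O(N\log N)$, Procedure Diameter Estimation costs $O(DN)$, and Procedure Ordering costs $O(ND\log L)$; the broadcast phase consists of at most $N$ windows of length $y\in O(D+M)$, for a total of $O(N(D+M))$. Using the standing assumption $N\le L$ (whence $\log N\le\log L$) together with $\log L\ge 1$, every term is absorbed into $O(NM)+O(ND\log L)=O(N(M+D\log L))$, as claimed.

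The step I expect to require the most care is the reuse of Algorithm Broadcast inside a bounded, pre-agreed window among nodes that are already awake rather than being woken by the source. I would have to check that the relative timing on which the proof of Theorem \ref{broad-trees} rests is driven only by the moment a node first hears a beep inside $I_j$, so that the argument is insensitive to the fact that the nodes were activated earlier; and I would have to dispose of the degenerate iterations of Procedure Ordering that arise when the network has fewer than $N$ nodes. In those iterations the participating set $P$ is empty, so Procedure Modified Find Max produces no active node and assigns no rank, and the corresponding interval $I_j$ simply stays silent -- harmless to both correctness and the time bound, but worth stating explicitly so that the ranks genuinely exhaust the node set and every message is in fact broadcast exactly once.
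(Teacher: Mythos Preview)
Your proposal is correct and follows essentially the same route as the paper: invoke Lemmas \ref{find-max}, \ref{synch}, \ref{diamest}, \ref{ranks} to guarantee common knowledge of $z$, $r$, $D^*$, $b$ and injective ranks, observe that $y\in O(D^*+M)$ makes the windows $I_j$ disjoint so Theorem \ref{broad-trees} applies in each, and then sum the five time contributions to get $O(N(M+D\log L))$. If anything, you are more careful than the paper, which does not explicitly address the already-awake nodes in the broadcast phase or the empty-$P$ iterations when the network has fewer than $N$ nodes.
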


\begin{proof}
By Lemma \ref{find-max}, Procedure Find Max correctly finds the node $z$ with the largest label. By Lemma \ref{synch},
all nodes compute the same round $r$, and hence start Procedure Diameter Estimation in the same round $r+1$. 
By Lemma \ref{ranks}, there is at most one node with any rank $j \leq N$.
By Lemma \ref{diamest}, $D^*$ is a linear upper bound on the diameter of the network.
Hence, in view of Theorem \ref{broad-trees}, the upper bound $y$ is indeed an upper bound on the
time of execution of  Algorithm Broadcast starting from any source node. This implies that all nodes broadcast their messages in pairwise disjoint time intervals, and hence all broadcasts are correct, in view of Theorem \ref{broad-trees}. This proves the correctness of Algorithm Gossiping.

It remains to estimate the execution time of the algorithm. Procedure Find Max takes time $O(N\log L)$. Procedure Synchronization takes time $O(N\log N)$.
Procedure Diameter Estimation takes time $O(ND)$.
Procedure Ordering takes time $O(ND\log L)$. At most $N$ executions of Algorithm Broadcast in step 6 of Algorithm Gossiping
take time $O(N(D+M))$. Hence Algorithm Gossiping takes time $O(N(M+D\log L))$.
\qq
\end{proof}

We conclude with the following lower bound on the time of gossiping that holds even for the class of trees of bounded diameter.

\begin{proposition}\label{lb}
Assume that all input messages have size $M \geq \gamma \log N$ for some constant  $\gamma >1$. Then there exist bounded diameter trees of size $\Theta(N)$ for which every
gossiping algorithm takes time $\Omega (MN)$. 
\end{proposition}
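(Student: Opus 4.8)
The plan is to exhibit a single family of hard instances built on stars, and to derive the bound by a counting (information-theoretic) argument exploiting the fact that in the beeping model a node learns at most one bit per round from its entire neighbourhood. Concretely, I would take the network to be a star with a center $c$ and $k=N-1$ leaves $v_1,\dots,v_k$; this graph is a tree of diameter $2$ and size $k+1=\Theta(N)$, as required. I fix distinct labels from $\{0,\dots,L-1\}$ (possible since $N\le L$), fix the center's own input message, and let the adversary wake all nodes simultaneously in round $1$, so that the only varying part of the instance is the tuple $(\mu_1,\dots,\mu_k)$ of leaf messages, each $\mu_i\in\{0,1\}^M$. Since every node, in particular $c$, must learn all messages, the gossiping time is at least the number of rounds $T$ needed for $c$ alone to learn them.

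First I would bound the information that can reach $c$. In each round $c$ either beeps or listens, and which of the two it does is determined by the deterministic algorithm from $c$'s view so far; a round in which $c$ beeps extends its view by a fixed symbol, while a round in which it listens extends it by a single bit (it hears a beep iff at least one of its $k$ neighbours beeps, and cannot tell how many did so). Hence after $T$ rounds the view of $c$ takes at most $2^{T}$ distinct values. Because the algorithm is deterministic and $c$'s label and input are fixed, the final state of $c$---and thus the multiset of messages it outputs---is a function of its view alone. Consequently two instances whose leaf-message multisets differ must produce distinct views at $c$, for otherwise $c$ would output the same (hence, for one of them, wrong) answer.

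Next I would count the instances that $c$ must tell apart. Correctness of gossiping only requires $c$ to recover the multiset $\{\mu_1,\dots,\mu_k\}$ of leaf messages, so the relevant quantity is the number of distinct such multisets, which is at least $2^{kM}/k!$, since each multiset is the image of at most $k!$ ordered tuples. Injectivity of the multiset-to-view map then forces $2^{T}\ge 2^{kM}/k!$, i.e. $T\ge kM-\log_2(k!)\ge kM-k\log_2 k$ (all logarithms to base $2$). Using $k\le N$ together with the hypothesis $M\ge\gamma\log N$, we get $\log_2 k\le\log_2 N\le M/\gamma$, so $k\log_2 k\le kM/\gamma$ and therefore $T\ge kM(1-1/\gamma)=\Omega(kM)=\Omega(MN)$, since $\gamma>1$ makes $1-1/\gamma$ a positive constant and $k=\Theta(N)$.

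I expect the delicate point to be precisely this last counting step, and in particular the reason the hypothesis $M\ge\gamma\log N$ with $\gamma>1$ is needed: because gossiping asks only for the \emph{multiset} of messages, the argument loses the $\log_2(k!)=\Theta(k\log k)$ term coming from relabelling the leaves, and this loss is dominated by the main term $kM$ exactly when $M$ exceeds $\log N$ by a constant factor bounded away from $1$. The remaining ingredients---that the star has bounded diameter and size $\Theta(N)$, that fixing the wakeup pattern and the center's input is without loss of generality for a lower bound, and that $c$ gleans at most one bit per listening round---are routine; the crux is to organise the whole argument around the view of the single high-degree node $c$, where the beeping model's inability to count simultaneous beeps is most costly.
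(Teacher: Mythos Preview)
Your proof is correct and follows essentially the same approach as the paper: a star graph as the hard instance, the one-bit-per-round bottleneck giving at most $2^{T}$ distinguishable views, and a counting argument that uses $M\ge\gamma\log N$ to absorb the $\Theta(k\log k)$ loss. The only variation is the vantage point---the paper bounds the view of a \emph{leaf} (all information must pass through the center to it) and counts $\binom{2^M}{\lfloor N/2\rfloor}$ configurations of distinct messages, whereas you bound the view of the \emph{center} and count multisets via $2^{kM}/k!$; both are the same information-theoretic cut and yield the same $\Omega(MN)$.
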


\section{Conclusion}

We considered two basic communication tasks, broadcasting and gossiping, in the arguably weakest communication model, in which in every round each node has
only the choice to beep or to listen. For the task of broadcasting, we proposed an optimal algorithm  working in time $O(D+m)$ for arbitrary networks of diameter $D$,
where $m$ is the message size.
For the task of gossiping, we presented an algorithm working in time $O(N(M+D\log L))$ for arbitrary networks of diameter $D$ with at most $N$ nodes.
Here $M$ denotes an upper bound on the size of all input messages, known to all nodes. It remains open if this complexity can be improved
in general. Note however, that our gossiping algorithm has optimal time for networks
of diameter bounded by a constant, if the following two assumptions are satisfied: the size of all input messages is the same, it is known to all nodes, and satisfies  $M \geq \gamma \log N$ for some constant  $\gamma >1$, and the size $L$ of the label space is polynomial in $N$. Indeed, in this case $\log L \in O(\log N)$ and $\log N \in O(M)$. Since for bounded $D$, we have $O(N(M+D\log L))=O(N(M+\log L))$, Algorithm Gossiping works in time $O(NM)$ in this case, which matches the lower bound $\Omega (MN)$, shown in
Proposition \ref{lb}. 

The two above
assumptions do not seem to be overly strong. Indeed, in most applications, we want messages exchanged by gossiping nodes to be large enough to hold at least the node's
label and some other useful information, which justifies the assumption $M \geq \gamma \log N$ for some constant  $\gamma >1$.
On the other hand,  labels of nodes are often assumed to be polynomial in the size of the network, as there is usually no
need of larger labels. Notice, moreover, that if these assumptions
are satisfied, the complexity of our gossiping algorithm is $O(NMD)$, i.e., it exceeds the lower bound $\Omega (MN)$ only by a factor of $D$, for any value
of the diameter $D$. Thus, our gossiping algorithm is not only optimal for networks of bounded diameter,  but it is close to optimal for ``shallow'' networks, e.g., 
networks whose diameter is logarithmic in their size, such as the hypercube.

\bibliographystyle{plain}

\begin{thebibliography}{12}
\bibitem{AABCHK}
Y. Afek, N. Alon, Z. Bar-Joseph, A. Cornejo, B. Haeupler, F. Kuhn, Beeping a maximal independent set.
Proc. 25th International Symposium on Distributed Computing (DISC 2011), LNCS 6950, 32-50.

\bibitem{BS}
B.~Baker and R.~Shostak, Gossips and telephones, Discr. Math. 2 (1972), 191-193.

\bibitem{CGR}
M. Chrobak, L. Gasieniec, W. Rytter, Fast broadcasting and gossiping in radio networks, J. Algorithms 43 (2002), 177-189.


\bibitem{CGR2}
M. Chrobak, L. Gasieniec, W. Rytter,  A randomized algorithm for gossiping in radio networks, Networks 43 (2004), 119-124.


\bibitem{CK}
A. Cornejo, F. Kuhn, Deploying wireless networks with beeps,
Proc. 24th International Symposium on Distributed Computing (DISC 2010), LNCS 6343, 148-162.

\bibitem{FPRU}
U.~Feige, D.~Peleg, P.~Raghavan and E.~Upfal, Randomized broadcast in networks,
Random Structures and Algorithms 1 (1990), 447-460.

\bibitem{FSW}
K.-T. Forster, J. Seidel, R. Wattenhofer,
Deterministic leader election in multi-hop beeping networks,
Proc.28th International Symposium on Distributed Computing (DISC 2014), LNCS 8784, 212-226.


\bibitem{FrLa}
P.~Fraigniaud and E.~Lazard, Methods and problems of communication in usual 
networks, Disc. Appl. Math. 53 (1994), 79-133.

\bibitem{GH}
M. Ghaffari and B. Haeupler,
Near optimal leader election in multi-hop radio networks,
Proc. 24th Annual ACM-SIAM Symposium on Discrete Algorithms (SODA 2013),  748-766.

\bibitem{HHL}
S.M.~Hedetniemi, S.T.~Hedetniemi and A.L.~Liestman, A survey
of gossiping and broadcasting in communication networks, Networks 18 (1988),
319-349.

\bibitem{Ko}
D. Kowalski, Algorithmic Foundations of Communication in Radio Networks, Morgan \& Claypool Publishers, 2011.

\bibitem{KP}
D. Kowalski, A. Pelc, Time of deterministic broadcasting in radio networks with local knowledge, SIAM Journal on Computing 33 (2004), 870-891.

\bibitem{KM}
E. Kushilevitz, Y. Mansour,  An Omega(D log (N/D)) lower bound for broadcast in radio networks. SIAM Journal on Computing  27 (1998), 702-712.

\bibitem{Pe}
A. Pelc, Fault-tolerant broadcasting and gossiping in communication networks, Networks 28 (1996), 143-156. 

\bibitem{SCH}
P. J. Slater, E. J. Cockayne, S. T. Hedetniemi,  Information dissemination in trees, SIAM Journal on Computing 10 (1981), 692-701.






\end{thebibliography}


\pagebreak

\begin{center}
{\bf \Large APPENDIX}
\end{center}

\vspace*{0.7cm}
\noindent
{\bf Proof of Lemma \ref{find-max}}

Let $\rho$ be the earliest round in which some node is  woken up by the adversary. (There may be several nodes woken up in round $\rho$).
Since each node beeps in the round following its wakeup (either by the adversary or by a beep), all nodes are woken up until round $\rho+N$.

We will use the following claim.

{\bf Claim.} If a node $v$ heard a beep in stage $j$,
then there exists a node $w$ active in stage $j$, with the binary representation $\ell=(c_1,\dots,c_{\lambda})$ of its label, such that
$c_j=1$.

In order to prove the claim, first note that stage $j$ of node $v$ starts after round $\rho +N$. 
For any node $u$ and any non-negative integer $i$, let $t_i(u)$ denote the round $t_i$ computed by node $u$ (relative to its wakeup round).
Since non-active nodes beep in some round $s>\rho +N$
only if they heard a beep in round $s-1$, it follows that if $v$ heard a beep in some round $t$ of its stage $j$, then an active node $w$
must have beeped in some round $t-N <t' \leq t$, such that no node beeped in round $t'-1$. According to the procedure, the only reason for such a beep is that, for some $k$,  $t'=t_k(w)$, node $w$ is active in its stage $k$, and that $c_k=1$, where $\ell=(c_1,\dots,c_{\lambda})$ is the binary representation of the label of $w$. Suppose that $k<j$. This implies $t_j(v)-3N \leq t-N <t_k(w)\leq t_{j-1}(w)=t_j(w)-(4N+1)$, hence $t_j(v)+N+1 <t_j(w)$.
Hence the wakeup rounds of nodes $v$ and $w$ differ by more than $N$, which is a contradiction. 
Thus $k$ cannot be smaller than $j$. For similar reasons, $k$ cannot be larger than $j$. This leaves the only possibility of $k=j$, which proves the claim.

Next, we prove that the node $z$ with the largest label among participating nodes remains active till the end of its stage $\lambda$.
Let  $(d_1,\dots,d_{\lambda})$ be the binary representation of this label.
Suppose, for contradiction, that $z$ becomes passive in some stage $j \leq \lambda$. According to the procedure, this implies that it heard a beep in stage $j$
and that $d_j=0$. In view of the claim, there is a node $w$ active in stage $j$, with the binary representation $(c_1,\dots,c_{\lambda})$ of its label, such that
$c_j=1$. Consider any index $k<j$. If $c_k=0$ then also $d_k=0$. Otherwise, since node $z$ is active in stage $k$, it beeps in stage $k$
and hence node $w$ would become passive in stage $k$, which contradicts the fact that it became passive only in stage $j$. This proves that the sequence
$(d_1,\dots,d_{\lambda})$ is lexicographically smaller than the sequence $(c_1,\dots,c_{\lambda})$, which contradicts the assumption that
$z$ has the largest label among participating nodes. 

Further, we prove that no node other than the node $z$ with the largest label among participating nodes remains active till the end of its stage $\lambda$.
Let $w$ be such a node with the binary representation $(c_1,\dots,c_{\lambda})$ of its label, and let $s$ be the first index where $d_s \neq c_s$.
Since $z$ is active in stage $s$ and $d_s=1$, node $z$ beeps in stage $s$, and hence $w$ hears it and becomes passive (at the latest) in stage~$s$.

It follows that $z$ is the only participating node that is active at the end of the execution of procedure Find Max. Node $z$ knows that it remained active at the end, so it knows that it has the largest label. Every other node $w$ (it is passive at the end of the procedure execution) deduces the binary representation $(d_1,\dots,d_{\lambda})$
of the label of $z$ as follows: $d_i=1$, if and only if $w$ heard a beep in stage~$i$. Indeed, if $d_i=1$ then it beeped in stage $i$, because it was active in this stage, and hence $w$ heard a beep at most $N$ rounds later, hence still in its stage $i$. If  $d_i=0$, then no node beeped in stage $i$ because all nodes that have 1
as the $i$th term of the binary representation of their label must be already passive in stage $i$, as this representation is lexicographically smaller than $(d_1,\dots,d_{\lambda})$. 

Since each time interval has length $O(N)$ and there are $\lambda=\lceil \log L \rceil$ intervals, the whole procedure takes time $O(N\log L)$ since the wakeup of the earliest node.

\vspace*{0.7cm}
\noindent
{\bf Proof of Lemma \ref{diamest}}

Let $\delta$ be the largest distance of any node from $z$.
For every node other than $z$, the value of the variable $level$ is its distance from $z$.
Hence in time interval $J_j$ exactly nodes at distance  at most $\delta +1-j$ from $z$ beep. It follows that $s=r+\delta N +1$ is the first round
of the form $r+xN+1$ in which
$z$ does not hear a beep. Consequently $\rho=\delta$. By the correctness of  Algorithm Broadcast, all nodes correctly decode the integer
$D^*=2\rho$. Since $\rho$ is the largest distance of any node from $z$, the diameter $D$ of the network satisfies inequalities $\rho \leq D \leq 2\rho$.
Thus $D^*$ is a linear upper bound on the diameter of the network. 

After decoding the integer $D^*$, all nodes know $N$, $r$, $\rho$ and $m$. Hence the round declared as {\em blue} is the same for all nodes.  
It was follows from the proof of Theorem \ref{broad-trees} that 
Algorithm Broadcast takes time at most $D+6m+2 \leq D^*+6m+2$.  Since $z$ starts broadcasting in round $r+\rho N+1$, by round
{\em blue} the procedure is completed. It takes time $O(\rho N +D^*+m)=O(\rho N +D^*+\log \rho)= O(DN+D+\log N)=O(DN)$.

\vspace*{0.7cm}
\noindent
{\bf Proof of Proposition \ref{lb}}

Consider the star $S$ with $\lfloor N/2 \rfloor$ leaves, i.e., a tree with one node $v$ of degree $\lfloor N/2 \rfloor$ and with $\lfloor N/2 \rfloor$ nodes of degree 1.
$S$ is a tree of diameter 2.
Let $w$ be any leaf. Consider any algorithm $\cal A$ accomplishing gossiping in $S$ in time $t$. The leaf $w$ can obtain information only from node $v$.
In time $t$ node $v$ can transmit $2^t$ sequences with terms in the set $\{b,s\}$, where $b$ denotes a beep and $s$ denotes silence. 

Consider the family $Z$ of possible sets of input messages initially held by the $\lfloor N/2 \rfloor$ nodes of $S$ other than $w$, assuming that each node has a different message of size $M$.
 If $|Z|>2^t$, then
node $v$ executing algorithm $\cal A$ must generate  the same sequence with terms in the set $\{b,s\}$ for two distinct sets of messages initially held by the $\lfloor N/2 \rfloor$ nodes of $S$ other than $w$, and consequently $w$ cannot correctly deduce the set of messages held by other nodes of $S$. This implies that $|Z|\leq 2^t$. 

Since there are $2^M$ possible messages of size $M$, the set $Z$ has size  ${2^M}\choose {\lfloor N/2 \rfloor}$.
For sufficiently large $N$ we have  $|Z|={{2^M}\choose {\lfloor N/2 \rfloor}}\geq (2^M/N)^{\lfloor N/2 \rfloor}$, hence\\
 $$t \geq \log |Z|= \log {{2^M}\choose {\lfloor N/2 \rfloor}} \geq \lfloor N/2 \rfloor(M-\log N) \geq (1-1/\gamma )M\lfloor N/2 \rfloor \in \Omega(MN),$$
in view of $M \geq \gamma \log N$.

\end{document}